\newcommand{\bigpare}[1]{\bigl(#1\bigr)}
\newcommand{\biggpare}[1]{\biggl(#1\biggr)}
\newcommand{\biggbra}[1]{\biggl\{#1\biggr\}}
\newcommand{\biggbrac}[1]{\biggl[#1\biggr]}
\newcommand{\bigset}[2]{\bigl\{#1\bigm|#2\bigr\}}
\newcommand{\Biggset}[2]{\Biggl\{#1\Biggm|#2\Biggr\}}
\newcommand{\norm}[1]{\| #1 \|}
\newcommand{\bignorm}[1]{\bigl\| #1 \bigr\|}
\newcommand{\abs}[1]{| #1 |}
\newcommand{\bigabs}[1]{\bigl| #1 \bigr|}
\newcommand{\Biggabs}[1]{\Biggl| #1 \Biggr|}
\newcommand{\jap}[1]{\langle #1 \rangle}
\def\a{\alpha}
\def\f{\varphi}
\def\g{\psi}
\def\l{\lambda}
\def\n{\nu}
\def\s{\sigma}
\newcommand{\F}{\Phi}
\newcommand{\G}{\Psi}
\newcommand{\dd}{\mathrm{d}}
\renewcommand{\L}{\mathcal{L}}
\renewcommand{\H}{\mathcal{H}}
\newcommand{\V}{\mathcal{V}}
\def\re{\mathbb{R}}
\def\co{\mathbb{C}}
\def\ze{\mathbb{Z}}
\newtheorem{thm}{Theorem}[section]
\newtheorem{lem}[thm]{Lemma}
\newtheorem{cor}[thm]{Corollary}
\theoremstyle{definition}
\newtheorem{ass}{Assumption}
\theoremstyle{remark}
\numberwithin{equation}{section}
\title{Continuum limit of the lattice quantum graph Hamiltonian}
\author{Pavel Exner}
\address{P. Exner: Doppler Institute for Mathematical Physics and Applied Mathematics, Czech Technical University, B\v rehov\'a 7, 11519 Prague, and Department of Theoretical Physics, Nuclear Physics Institute, Czech Academy of Sciences, 25068 \v{R}e\v{z}, Czechia}
\email{exner@ujf.cas.cz}
\author{Shu Nakamura}
\address{S. Nakamura: Department of Mathematics, Faculty of Sciences, Gakushuin University, 1-5-1, Mejiro, Toshima, Tokyo, Japan 171-8588}
\email{shu.nakamura@gakushuin.ac.jp}
\author{Yukihide Tadano}
\address{Y. Tadano: Department of Mathematics, Faculty of Science Division I, Tokyo University of Science, 1-3,
Kagurazaka, Shinjuku, Tokyo, Japan 162-8601}
\email{y.tadano@rs.tus.ac.jp}
\begin{document}
\maketitle

\begin{abstract}
We consider the quantum graph Hamiltonian on the square lattice in Euclidean space, and we show that the spectrum of the Hamiltonian converges to the corresponding Schr\"odinger operator on the Euclidean space in the continuum limit, and that the corresponding eigenfunctions and eigenprojections also converge in some sense. We employ the discrete Schr\"odinger operator as the intermediate operator, and we use a recent result by the second and third author on the continuum limit of the discrete Schr\"odinger operator.
\end{abstract}

\section{Introduction}

In mathematics and physics one often meets situations when we investigate a large structure being interested in its gross properties independent of the local structure. A classical example is the homogenization theory -- see, e.g., \cite{AP, BP, BS} and references therein. Another example providing deep mathematical problems as well as a number of applications is represented by properties of large networks \cite{LL}.

The present paper is devoted to a problem od this type appearing in the theory of \emph{quantum graphs}, which is a short name for Schr\"odinger operators the configuration space of which is a metric graph \cite{BK}. To make such an operator self-adjoint, it is not enough to have the potential real-valued and sufficiently regular; one also has to define properly the conditions matching functions from the operator domain at the graph vertices \cite[Thm.~1.4.4]{BK}. There is a large number of ways how to choose those conditions among which there is a smaller and distinguished subset, namely those preserving continuity at the vertices. In such a case there is just one real parameter associated with each vertex; usually the term \emph{$\delta$-coupling} is employed.

This paper is concerned with a family of such quantum graphs. It was observed in~\cite{ExHeSe} that a square lattice graph with a varying $\delta$-coupling at the vertices and the vertex spacing tending to zero can approximate Schr\"odinger operator in $L^2(\re^\nu)$ provided the energy is rescaled by the dimension $\nu$; this approximation was illustrated on chaotic motion in billiards. What was left out there, however, was the existence of the limit and the type of the convergence. These are the question addressed here. To get the answer we combine two main elements. One is the recent result of two of the present authors \cite{NaTad} on the continuum limit of discrete Schr\"odinger operators, the other is the duality \cite{Ca, Ex, Pa} between a Schr\"odinger operator on a metric graph and a suitable operator on the associated discrete graph.

\section{Problem statement and the main result}

To begin with, we introduce in the standard way \cite{BK} the quantum graph Hamiltonian, that is, the Schr\"odinger operator on the metric graph. The latter will be in our case the $\n$-dimensional square lattice graph of the lattice spacing $\ell>0$,
\[
\Gamma= (\V,\L), \quad \V=\ell\ze^\n, \quad
\mathcal{L}=\bigset{\L_{jn}=[j,n]}{j,n\in \V, |j-n|=\ell}
\]
where $[j,n]$ denotes the line segment connecting $j$ and $n\in\ell\ze^\n$. The symbol $\V$ denotes the set of vertices, and $\L$ is the set of edges in $\Gamma$. We introduce the Hilbert space of functions on the graph by
\[
\H_1= L^2(\Gamma)=\bigoplus_{\L_{jn}\in\L} L^2(\L_{jn}),
\]
with the inner product
\[
\jap{\f,\g}_{\H_1} = \frac{\ell^{\n-1}}{\n} \sum_{\L_{jn}\in\L}
\int_{\L_{jn}} \overline{\f_{jn}(t)}\g_{jn}(t)\,\dd t, \quad\text{where }\:
\f=(\f_{jn}), \g=(\g_{jn})\in \H_1.
\]
We adopt the following hypothesis:
\begin{ass}\label{ass-main}
$V$ is a real-valued continuous function on $\re^d$, and bounded from below. Furthermore, $(V(\cdot)+M)^{-1}$ is uniformly continuous for some $M>0$, and there is a $c_1>0$ such that
\[
c_1^{-1} (V(x)+M) \leq V(y)+M\leq c_1(V(x)+M) \quad \text{if }\: |x-y|\leq 1.
\]
\end{ass}

\medskip

We denote $V_j=V(j)$ for $j\in\V$ and set $\a_j:=\ell V_j$ for $j\in\V$. The Sobolev space of order one on the graph $\Gamma$ is then given by
\[
H^1(\Gamma)=\bigset{(\f_{jn})\in \H_1}{\f_{jn}\in H^1([j,n]),
 \f_{jn}(j)=\f_{jm}(j)\text{ for }\: j\in\V \text{ and }n,m\in\V(j)},
\]
where $\V(j)=\{n\,|\, |j-n|=\ell\}$ is the set of vertices adjacent to $V_j$, in other words, the neighborhood of the point $j$ in the discrete graph associated with $\Gamma$.

On the domain $\mathcal{Q}(H_1)=\bigset{\f\in H^1(\Gamma)}{\sum_{j\in\V}\a_j|\f_j|^2<\infty}$ we define the quadratic form $q_\a$ by means of the formula
\[
q_\a(\f,\g)=\jap{\f',\g'} + \sum_{j\in\V}\a_j\overline{\f_j}\g_j,\quad \f,\g\in \mathcal{Q}(H_1),
\]
where $(\f')_{jn}(t)=\frac{\dd}{\dd t}\f_{jn}(t)$ on $\L_{jn}$, and $\f_j=\f_{jn}(j)$ for $n\in\V(j)$. We denote the self-adjoint operator associated with $q_\a$ by $H_1$, that is, $\jap{\f,H_1\g} = q_\a(\f,\g)$
holds for $\f,\g\in \mathcal{D}(H_1)$. It is known \cite[Sec.~1.4.3]{BK} that
\[
\mathcal{D}(H_1)=\Biggset{\g=(\g_{jn})\in H^1(\Gamma) \cap \bigoplus_{\L_{jn}\in\L} H^2(\L_{jn})}{\sum_{n\in\V(j)}\g_{jn}'(j)=\a_j\g_j}
\]
and $(H_1\g)_{jn}(t)=-\g_{jn}''(t)$ on $\L_{jn}$. We recall that $\Gamma$ is regarded as a non-oriented graph and the derivatives entering the condition specifying $\mathcal{D}(H_1)$ are all conventionally taken in the outward direction.

The second object to consider is the Schr\"odinger operator $H$ on $L^2(\re^\n)$ given by
\[
H\f(x)=-\triangle\f(x) +V(x)\f(x), \quad x\in\re^\n
\]
for $\f\in \mathcal{D}(H)=\bigset{\f\in H^2(\re^\n)}{V\f\in L^2(\re^\n)}$. We recall that under our assumption about the potential, $H$ is a self-adjoint operator on $L^2(\re^\n)$.

Our main result claims that in the limit $\ell\to 0$ the operators $\n H_1$ approximate $H$ in the sense of norm resolvent convergence.
\begin{thm}\label{thm-main}
Let $z\in \co\setminus\re$, and adopt Assumption~\ref{ass-main}. Then there is a bounded operator $\G:\:\H_1\to L^2(\re^\n)$ such that in the limit $\ell\to 0$ we have
\begin{align*}
&\bignorm{(H-z)^{-1} - \G(\n H_1-z)^{-1} \G^*}_{\mathcal{B}(L^2)}\to 0,\\[.2em]
&\bignorm{(\n H_1-z)^{-1} - \G^*(H-z)^{-1}\G}_{\mathcal{B}(\H_1)}\to 0.
\end{align*}
\end{thm}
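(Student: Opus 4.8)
The plan is to bridge the quantum graph and the continuum through a discrete Schr\"odinger operator on $\ell\ze^\n$, combining the duality of \cite{Ca,Ex,Pa} in resolvent form with the continuum-limit theorem of \cite{NaTad}. Writing $\m=z/\n$ and $k=\sqrt\m$, I would first express $(H_1-\m)^{-1}$ exactly through the resolvent of an explicit discrete operator $L_\ell(z)$, then recognise $L_\ell(z)$ as a vanishing perturbation of $H_\ell-z$, where $H_\ell=-\triangle_\ell+V$ is the scaled discrete Schr\"odinger operator treated by \cite{NaTad}. The energy is rescaled by $\n$ precisely so that $\n k^2=z$, which is what forces the discrete potential to converge to $V$ itself rather than to a multiple of it.

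First I would fix the boundary triple for the maximal edge operator, with trace maps sending an edge function to its vertex values and to the sum of its outgoing derivatives. Solving $-\f''=\m\f$ on each edge with prescribed vertex data is explicit, and computing the outgoing derivatives produces the Weyl function
\[
M(\m)=\frac{k}{\sin(k\ell)}\,\triangle_{\mathrm d}+2\n k\tan(k\ell/2)
\]
on the boundary space $\mathcal G=\ell^2(\V)$, where $\triangle_{\mathrm d}\f_j=\sum_{n\in\V(j)}(\f_n-\f_j)$ is the combinatorial Laplacian. Here I would note a point easy to overlook: Green's identity for the weighted inner product $\jap{\cdot,\cdot}_{\H_1}$ forces $\mathcal G$ to carry the weight $\frac{\ell^{\n-1}}{\n}$, which differs from the weight $\ell^\n$ of the discrete $\ell^2(\ell\ze^\n)$ used by \cite{NaTad} by the factor $(\n\ell)^{-1}$; tracking this discrepancy is essential in the final bookkeeping. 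Since the $\d$-coupling is the self-adjoint boundary condition parametrised by $\a=\mathrm{diag}(\a_j)$, Krein's formula gives the exact identity
\[
(H_1-\m)^{-1}=(H_1^D-\m)^{-1}+\c(\m)\bigpare{\a-M(\m)}^{-1}\c(\bar\m)^*,
\]
with $H_1^D$ the Dirichlet-decoupled edge operator and $\c(\m)\colon\mathcal G\to\H_1$ the Poisson operator, bounded with $\bignorm{\c(\m)}^2=O(\n\ell)$.

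Next I would identify the discrete operator. Using $\a_j=\ell V_j$ and $\triangle_{\mathrm d}=\ell^2\triangle_\ell$ and factoring out one power of $\ell$,
\[
L_\ell(z):=\ell^{-1}\bigpare{\a-M(\m)}=-a(\ell)\triangle_\ell+V-b(\ell),\qquad a(\ell)=\frac{k\ell}{\sin(k\ell)},\quad b(\ell)=\frac{2\n k\tan(k\ell/2)}{\ell},
\]
with scalars $a(\ell)\to1$ and $b(\ell)\to\n k^2=z$. The main obstacle surfaces here: $\bigpare{a(\ell)-1}\triangle_\ell$ is a bounded operator of norm of order $\n k^2=\abs z$, so it cannot be discarded as an additive error. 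I would circumvent this by factoring $a(\ell)$ out completely and reading
\[
L_\ell(z)=a(\ell)\Bigpare{-\triangle_\ell+a(\ell)^{-1}V-a(\ell)^{-1}b(\ell)}
\]
as a genuine discrete Schr\"odinger operator with potential $a(\ell)^{-1}V\to V$ and spectral parameter $a(\ell)^{-1}b(\ell)\to z$. As $a(\ell)^{-1}$ is a scalar tending to $1$, the comparability hypothesis in Assumption~\ref{ass-main} is stable under $V\mapsto a(\ell)^{-1}V$ for small $\ell$, so \cite{NaTad} applies to this family and yields, with its identification operator $P_\ell\colon\ell^2(\ell\ze^\n)\to L^2(\re^\n)$,
\[
\bignorm{(H-z)^{-1}-P_\ell\,L_\ell(z)^{-1}P_\ell^*}_{\mathcal B(L^2)}\to0
\]
together with the dual statement. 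Checking that the convergence in \cite{NaTad} is uniform along the rescaled potentials $a(\ell)^{-1}V$ is the technical heart of this step.

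Finally I would assemble the pieces. The Dirichlet term is harmless: the spectrum of $H_1^D$ begins at $(\pi/\ell)^2$ while $\m\notin\re$, so $\bignorm{(H_1^D-\m)^{-1}}=O(\ell^2)$ is negligible after the prefactor $\tfrac1\n$. I would then set $\G:=P_\ell T_\ell$, where $T_\ell\colon\H_1\to\ell^2(\ell\ze^\n)$ is a uniformly bounded vertex-averaging map arranged so that $T_\ell\c(\m)$ coincides, up to a norm-vanishing error, with the canonical identification of $\mathcal G$ with $\ell^2(\ell\ze^\n)$. Inserting the Krein identity into $\G(\n H_1-z)^{-1}\G^*=\tfrac1\n P_\ell T_\ell(H_1-\m)^{-1}T_\ell^*P_\ell^*$, the prefactor $\tfrac1\n$, the factor $\ell^{-1}$ produced by $\bigpare{\a-M(\m)}^{-1}=\ell^{-1}L_\ell(z)^{-1}$, and the weight ratio $\n\ell$ between the inner products of $\mathcal G$ and $\ell^2(\ell\ze^\n)$ cancel to $1$, collapsing the main term to $P_\ell L_\ell(z)^{-1}P_\ell^*$; this converges to $(H-z)^{-1}$ by the previous step, while the Dirichlet complement is suppressed by the $O(\ell^2)$ bound. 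The second asserted limit follows symmetrically, with $\G^*\G$ acting on $\H_1$ as the asymptotic projection onto the vertex-determined part and the Dirichlet complement again controlled. Beyond the coefficient obstacle already flagged, the remaining difficulty is to construct $T_\ell$: point evaluation is unbounded on $\H_1=L^2(\Gamma)$, so $T_\ell$ must be a genuine averaging that nevertheless inverts the $\m$-dependent Poisson operator to leading order, and whose three competing $\ell$-power normalisations---those of $\jap{\cdot,\cdot}_{\H_1}$, of $\mathcal G$, and of $\ell^2(\ell\ze^\n)$---combine to leave the stated limits.
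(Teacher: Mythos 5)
Your overall architecture coincides with the paper's: pass from the quantum graph to a discrete operator on $\ell\ze^\n$ by solving the edge ODEs explicitly, then invoke the continuum-limit theorem of \cite{NaTad}. Your Weyl-function computation is, up to packaging (Krein formula versus sandwiched resolvent), the same calculation as the paper's Lemma~\ref{lem-key-ODE-estimate}. However, the central step of your reduction fails as written. You propose to read $L_\ell(z)=a(\ell)\bigl(-\triangle_d+a(\ell)^{-1}V-a(\ell)^{-1}b(\ell)\bigr)$ (writing $\triangle_d$ for the normalized discrete Laplacian, your $\triangle_\ell$) as a \emph{genuine} discrete Schr\"odinger operator with potential $a(\ell)^{-1}V$, and to apply Theorem~\ref{thm-NT} to this family. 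But since $z\in\co\setminus\re$, the number $k=\sqrt{z/\n}$ is non-real, and $a(\ell)=\frac{k\ell}{\sin(k\ell)}=1+\frac{(k\ell)^2}{6}+\mathcal{O}(\ell^4)$ has non-vanishing imaginary part for every small $\ell>0$. Hence $a(\ell)^{-1}V$ is complex-valued: the operator $-\triangle_d+a(\ell)^{-1}V$ is not self-adjoint, Assumption~\ref{ass-main} (real-valuedness, semiboundedness, comparability) is not even meaningful for it, and Theorem~\ref{thm-NT} does not apply. On top of this, Theorem~\ref{thm-NT} concerns a \emph{fixed} potential, whereas your potentials vary with $\ell$ in lockstep with the mesh; the uniformity you yourself flag as ``the technical heart'' of the step is nowhere supplied. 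So this is not a loose end but a step that breaks.

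The obstacle that motivated your factoring --- that $(a(\ell)-1)\triangle_d$ has operator norm of order $\abs{z}$ and so cannot be dropped as an additive error --- has a direct resolution that requires no factoring at all: the error is \emph{relatively} small. By Lemma~\ref{lem-relative-bound}, $\norm{\triangle_d(H_2-z)^{-1}}\leq C\ell^{-1}$ and $\norm{V(H_2-z)^{-1}}\leq C\ell^{-1}$, so the deviation in the Laplacian coefficient contributes $\mathcal{O}(\ell^2)\cdot\mathcal{O}(\ell^{-1})=\mathcal{O}(\ell)$ after composition with the resolvent, and likewise for the deviations multiplying $V$ and the spectral parameter. This is exactly the paper's route: it keeps the fixed self-adjoint $H_2=-\triangle_d+V$, collects all $\mathcal{O}(\ell^2)$-coefficient deviations into non-self-adjoint operators $M_1,M_2$ with $\norm{(H_2-z)^{-1}M_i}=\mathcal{O}(\ell)$, and inverts $1+(H_2-z)^{-1}M_1$ by a Neumann series; \cite{NaTad} is then applied only to the unperturbed, $\ell$-independent potential $V$, so no uniformity question arises. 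Your proof can be repaired by substituting this perturbation argument for the complex-potential step. Two items you leave unconstructed would then also be forced into the paper's form: the identification map $T_\ell$ (the paper takes $I^*$, the adjoint of the linear-interpolation embedding, with the explicit bounds of Lemmata~\ref{lem-I^*-K-estimate} and \ref{lem-IK-1-estimate}), and the second resolvent limit, which does not follow ``symmetrically'' but needs the mapping property of $(\n H_1-z)^{-1}$ into $H^1(\Gamma)$ together with Corollary~\ref{cor-II^*-1-estimate}.
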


\medskip

Since the approximation relates operators acting in different Hilbert spaces, it is essential to specify a suitable identification map between them. Approximations of such a type have been studied in other situations \cite{EP, NaTad}, whereas a general presentation of the method can be found in the book \cite{Po}.

We will construct the operator $\G$ later, but we note already here that it is not invertible, and actually neither injective nor surjective. Still this result is sufficient to establish the convergence of the spectrum. We denote the spectrum of a self-adjoint operator $A$ by $\s(A)$, and by $E_A(\Omega)$ its spectral projection to $\Omega\subset\re$.
\begin{cor}\label{cor-spectral-projection}
Let $a,b\in\re$, $a<b$, and suppose $a,b\notin \s(H)$. Then $a,b\notin \s(\n H_1)$ if $\ell$ is
sufficiently small,  and
\[
\bignorm{\G E_{\n H_1}((a,b))\G^* -E_H((a,b))}_{\mathcal{B}(L^2)}\to 0 \quad \mathrm{as}\;\, \ell\to 0.
\]
In particular, if $\l$ is an isolated eigenvalue of $H$ of multiplicity $m$ and $\g\in L^2$ is a corresponding eigenfunction, $H\g=\l\g$, then there are isolated eigenvalues $\l_{k,\ell}$ and eigenfunctions $\g_{k,\ell}\in\H_1$ of $\n H_1$ for each $k=1,\dots,m$ and $\ell>0$, satisfying $\n H_1\g_{k,\ell}=\l_{k,\ell}\g_{k,\ell}$, such that $\l_{k,\ell}\to \l$ and $\sum_{k=1}^m\G\g_{k,\ell} \to \g$ as $\ell\to 0$.
\end{cor}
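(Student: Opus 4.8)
The plan is to derive the corollary from Theorem~\ref{thm-main} by the Riesz-projection (contour-integral) method; the only real difficulty is that a contour enclosing a piece of the real spectrum must cross $\re$ at $a$ and $b$, where the hypotheses of Theorem~\ref{thm-main} are not directly available. Throughout write $R_\ell(z)=(\n H_1-z)^{-1}$ and $R(z)=(H-z)^{-1}$. Two soft facts will be used repeatedly: by self-adjointness $\bignorm{R_\ell(z)}=1/\dist(z,\s(\n H_1))\le1/\abs{\Im z}$ (so every non-real $z$ lies automatically in the resolvent set of $\n H_1$), and, by the operator-valued Vitali--Porter theorem, the pointwise limits of Theorem~\ref{thm-main} are in fact locally uniform on $\co\setminus\re$, since the families $\G R_\ell(z)\G^*$ and $R_\ell(z)$ are holomorphic and locally bounded there.

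First I would show that $\s(\n H_1)$ stays away from $a$ and $b$. Put $\d=\dist(a,\s(H))>0$; evaluating the second limit of Theorem~\ref{thm-main} at $a+i\e$ gives $\bignorm{R_\ell(a+i\e)}\to\bignorm{\G^*R(a+i\e)\G}\le\bignorm{\G}^2/\d$, whence for small $\ell$ one has $\dist(a+i\e,\s(\n H_1))=1/\bignorm{R_\ell(a+i\e)}\ge\d/(2\bignorm{\G}^2)=:\rho$. Choosing $\e=\rho/2$, the disk of radius $\rho$ about $a+i\e$, and hence a fixed real neighbourhood of $a$, avoids $\s(\n H_1)$; the same holds at $b$. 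Consequently $R_\ell(z)$ is holomorphic, and together with $R(z)$ uniformly bounded, on fixed complex disks $D_a,D_b$ centred at $a,b$. On $D_a$ the functions $z\mapsto\G R_\ell(z)\G^*-R(z)$ are then holomorphic, uniformly bounded, and converge to $0$ off $\re$, so Vitali--Porter forces uniform convergence to $0$ on compact subsets of $D_a$, \emph{including the real point $a$}; likewise at $b$, and likewise for $R_\ell(z)-\G^*R(z)\G$. Fixing a rectangular contour $C$ with vertical sides over $a,b$ of small height $h$ and horizontal sides at $\Im z=\pm h$, the curve meets $\re$ only at $a,b$, encloses $\s(H)\cap(a,b)$, and for small $\ell$ is disjoint from $\s(\n H_1)$; patching the disk estimates with the locally uniform convergence off $\re$ gives $\G R_\ell(z)\G^*\to R(z)$ and $R_\ell(z)\to\G^*R(z)\G$ uniformly on $C$.

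Since the only real points of $C$ are $a,b$, which lie in the resolvent set of $\n H_1$ for small $\ell$, the Riesz formula gives $P_\ell:=E_{\n H_1}((a,b))=\frac1{2\pi i}\oint_C(z-\n H_1)^{-1}\dd z$, and integrating the uniform limits along $C$ yields
\[
\G P_\ell\G^*\longrightarrow\frac1{2\pi i}\oint_C(z-H)^{-1}\dd z=E_H((a,b)),\qquad P_\ell\longrightarrow\G^*E_H((a,b))\G=:Q,
\]
in $\mathcal B(L^2)$ and $\mathcal B(\H_1)$ respectively; the first limit is precisely the displayed assertion. For the eigenvalue count, set $P=E_H((a,b))$, of rank $m$. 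As a norm-limit of the orthogonal projections $P_\ell$, the operator $Q$ is itself an orthogonal projection, and $Q=\G^*P\G$ has $\mathrm{rank}\,Q\le\mathrm{rank}\,P=m$; since $\bignorm{P_\ell-Q}<1$ for small $\ell$ and two orthogonal projections within distance $<1$ have the same rank, $\mathrm{rank}\,P_\ell=\mathrm{rank}\,Q\le m$. On the other hand $\G P_\ell\G^*\to P$ with $\mathrm{rank}\,P=m$ and lower semicontinuity of the rank give $\mathrm{rank}\,P_\ell\ge\mathrm{rank}(\G P_\ell\G^*)\ge m$, so $\dim\Ran E_{\n H_1}((a,b))=m$ exactly. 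Hence $\n H_1$ has precisely $m$ eigenvalues (with multiplicity) $\l_{k,\ell}$ in $(a,b)$, necessarily isolated, and taking the interval arbitrarily small about $\l$ gives $\l_{k,\ell}\to\l$. Finally, choosing an orthonormal eigenbasis $\{\g^{(0)}_{k,\ell}\}_{k=1}^m$ of $\Ran P_\ell$ and setting $\g_{k,\ell}:=\jap{\G\g^{(0)}_{k,\ell},\g}\,\g^{(0)}_{k,\ell}$, which are again eigenfunctions of $\n H_1$, one computes $\sum_{k=1}^m\G\g_{k,\ell}=\G P_\ell\G^*\g\to P\g=\g$, as required.

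The hard part is controlling the resolvents exactly at the real crossing points $a,b$, i.e. the Vitali--Porter step carrying the convergence up to the real axis. This is the only place where more than a formal manipulation of Theorem~\ref{thm-main} is needed, and it is what forces the a priori spectral gap near $a,b$ established in the middle paragraph; everything else is bookkeeping of contour integrals and finite-rank linear algebra.
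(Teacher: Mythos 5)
Your overall strategy --- establishing a spectral gap of $\n H_1$ near $a,b$, upgrading the convergence of Theorem~\ref{thm-main} to uniform convergence on a contour crossing $\re$ at $a,b$, and then integrating the Riesz formula --- is exactly the kind of argument the paper intends: the paper gives no proof of the corollary beyond the remark that it follows from Theorem~\ref{thm-main} by arguments analogous to \cite{NaTad}, and those are precisely contour-integral arguments of this type. The analytic core of your proof is correct, with one cosmetic caveat: $\G=\F I^*$ depends on $\ell$, so statements like $\norm{R_\ell(a+i\e)}\to\norm{\G^*R(a+i\e)\G}$ should be phrased as estimates using an $\ell$-uniform bound on $\norm{\G}$ (which does hold: $\F$ is an isometry and a direct computation gives $\norm{I}\leq 1$ for every $\ell$).

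There is, however, a genuine flaw in the rank-counting step. You declare $Q:=\G^*E_H((a,b))\G$ to be an orthogonal projection ``as a norm-limit of the orthogonal projections $P_\ell$''. That deduction is not available: $Q$ depends on $\ell$ (through $\G$, and indeed the space $\H_1$ itself varies with $\ell$), so there is no fixed limit operator; what you actually have is a family of pairs $(P_\ell,Q_\ell)$ acting on the $\ell$-dependent space $\H_1$ with $\norm{P_\ell-Q_\ell}\to0$, and for fixed $\ell$ the self-adjoint operator $Q_\ell=\G^*E_H((a,b))\G$ is in general \emph{not} idempotent --- that would require $\G\G^*$ to act as the identity on $\Ran E_H((a,b))$, which is false. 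Consequently the fact that two orthogonal projections at distance $<1$ have equal rank cannot be applied to the pair $(P_\ell,Q_\ell)$, and your inequality $\mathrm{rank}\,P_\ell\leq m$ is left unproved. The repair is elementary and does not need $Q_\ell$ to be a projection: if $P$ is an orthogonal projection and $F$ is any bounded operator with $\mathrm{rank}\,F<\mathrm{rank}\,P$, then $\norm{P-F}\geq 1$, because $\ker F=(\Ran F^*)^\perp$ has codimension $\mathrm{rank}\,F$, so there is a unit vector $v\in\Ran P\cap\ker F$, and $\norm{(P-F)v}=\norm{Pv}=1$. Since $\mathrm{rank}\,Q_\ell\leq\mathrm{rank}\,E_H((a,b))=m$, the assumption $\mathrm{rank}\,P_\ell>m$ would force $\norm{P_\ell-Q_\ell}\geq1$, contradicting $\norm{P_\ell-Q_\ell}\to0$; thus $\mathrm{rank}\,P_\ell\leq m$ for small $\ell$, and combined with your (correct) lower bound $\mathrm{rank}\,P_\ell\geq\mathrm{rank}(\G P_\ell\G^*)\geq m$ from lower semicontinuity of rank under norm convergence to the fixed operator $E_H((a,b))$, one gets $\mathrm{rank}\,P_\ell=m$. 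With this patch the remainder of your argument --- isolation of the $m$ eigenvalues, $\l_{k,\ell}\to\l$ by shrinking the interval, and $\sum_{k=1}^m\G\g_{k,\ell}=\G P_\ell\G^*\g\to E_H((a,b))\g=\g$ --- goes through.
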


We also have the convergence of the spectrum of $\nu H_1$ to that of $H$ with respect to the Hausdorff distance, which is defined by
\[
d_\mathrm{H}(X,Y)=\max\biggbra{\sup_{x\in X} d(x,Y),\, \sup_{y\in Y} d(y,X)},
\]
where $X,Y\subset \re$ and $d(\cdot,\cdot)$ denotes the Euclidean distance.
\begin{cor}\label{cor-Hausdorff-convergence}
Let $M>0$ large enough to ensure that $-M<\inf\s(H)$. Then for all sufficiently small $\ell>0$ one has $-M\notin\s(\n H_1)$ and
\[
d_\mathrm{H}(\s((H+M)^{-1}),\s((\n H_1+M)^{-1}))\to 0 \quad \mathrm{as}\;\, \ell\to 0.
\]
In particular, $\s(\n H_1)$ converges to $\s(H)$ as $\ell\to 0$ locally in terms of the Hausdorff distance.
\end{cor}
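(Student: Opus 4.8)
The plan is to pass to the bounded resolvents $R=(H+M)^{-1}$ and $S=S_\ell=(\n H_1+M)^{-1}$, prove the first displayed convergence $d_\mathrm{H}(\s(R),\s(S))\to0$ directly, and obtain the ``in particular'' clause afterwards by a soft change of variables. First one must know that $S$ is genuinely bounded for small $\ell$, i.e. that $-M\notin\s(\n H_1)$. Since $-M<\inf\s(H)$, I would fix $a_0<-M<b_0<\inf\s(H)$ with $a_0,b_0\notin\s(H)$, so that $E_H((a_0,b_0))=0$; Corollary~\ref{cor-spectral-projection} then gives $\bignorm{\G E_{\n H_1}((a_0,b_0))\G^*}\to0$. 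To upgrade this to $E_{\n H_1}((a_0,b_0))=0$, I would set $P_\ell=E_{\n H_1}((a_0,b_0))=P_\ell^2$, use $\G P_\ell\G^*=(\G P_\ell)(\G P_\ell)^*$ to get $\bignorm{\G P_\ell\G^*}=\bignorm{P_\ell\G^*\G P_\ell}$, and invoke the asymptotic isometry of $\G$ on bounded spectral subspaces in the form $P_\ell\G^*\G P_\ell\ge(1-\d_\ell)P_\ell$ with $\d_\ell\to0$; this forces $P_\ell=0$ for small $\ell$, hence $(a_0,b_0)\cap\s(\n H_1)=\emptyset$ and in particular $-M\notin\s(\n H_1)$.

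With $R,S\ge0$ bounded and self-adjoint in hand, I record that $0\in\s(R)$ and $0\in\s(S)$ since $H$ and $\n H_1$ are unbounded above. Extending Theorem~\ref{thm-main} to the real point $z=-M$ (legitimate once $-M$ lies in both resolvent sets, via the first resolvent identity from a nonreal base point together with the uniform resolvent bounds) gives $\bignorm{R-\G S\G^*}_{\mathcal{B}(L^2)}\to0$; only the first of the two estimates is needed. Since for bounded self-adjoint operators $d_\mathrm{H}(\s(B_1),\s(B_2))\le\bignorm{B_1-B_2}$, this already yields $d_\mathrm{H}(\s(R),\s(\G S\G^*))\to0$.

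The core is to compare $\s(\G S\G^*)$ with $\s(S)$. Writing $\G S\G^*=(\G S^{1/2})(\G S^{1/2})^*$ and using $\s(TT^*)\setminus\{0\}=\s(T^*T)\setminus\{0\}$ with $T=\G S^{1/2}$, the nonzero part of $\s(\G S\G^*)$ equals that of $S^{1/2}\G^*\G S^{1/2}$. The weighted quasi-unitarity $\bignorm{S^{1/2}(\G^*\G-I)S^{1/2}}\to0$, that is $\bignorm{(\n H_1+M)^{-1/2}(\G^*\G-I)(\n H_1+M)^{-1/2}}\to0$ from the construction of $\G$, gives $\bignorm{S^{1/2}\G^*\G S^{1/2}-S}\to0$ and hence $d_\mathrm{H}(\s(S^{1/2}\G^*\G S^{1/2}),\s(S))\to0$. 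Because $0\in\s(S)$ and $0$ lies within $o(1)$ of $\s(\G S\G^*)$ (the latter being Hausdorff-close to $\s(R)\ni0$), the passage from nonzero spectra to full spectra costs only $o(1)$, so $d_\mathrm{H}(\s(\G S\G^*),\s(S))\to0$. Chaining the three estimates through the triangle inequality for $d_\mathrm{H}$ delivers $d_\mathrm{H}(\s(R),\s(S))\to0$, the first assertion of the corollary.

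Finally, I would transport this through the decreasing homeomorphism $g(\l)=(\l+M)^{-1}$, which maps $[\inf\s(H),+\infty]$ onto the compact interval $[0,(\inf\s(H)+M)^{-1}]$ and sends $+\infty$ to $0$; it carries $\s(H)$ and $\s(\n H_1)$ onto $\s(R)$ and $\s(S)$, and since $g$ and $g^{-1}$ are uniformly continuous on the relevant compact ranges, global Hausdorff convergence of the resolvent spectra is equivalent to local Hausdorff convergence of $\s(\n H_1)$ to $\s(H)$. The hard part throughout is the step separating $\s(S)$ from the compression $\s(\G S\G^*)$: because $\G$ is neither injective nor surjective, neither Theorem~\ref{thm-main} nor Corollary~\ref{cor-spectral-projection} suffices by itself, and one must bring in the asymptotic-isometry (quasi-unitarity) estimates for $\G$ on bounded spectral windows, the same estimates that also exclude low-lying spurious spectrum in the first step.
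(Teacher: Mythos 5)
Your architecture --- pass to the bounded resolvents, combine the resolvent estimate of Theorem~\ref{thm-main} with quasi-unitarity of $\G$, then transport through $g(\l)=(\l+M)^{-1}$ --- is exactly the kind of argument the paper intends; the paper itself gives no proof, deferring to ``arguments analogous to those employed in \cite{NaTad}''. However, your proof has a genuine gap at the step you need first: excluding spectrum of $\n H_1$ below $-M$. Your projection argument clears only the bounded window $(a_0,b_0)$ with $a_0$ chosen just below $-M$; it says nothing about $\s(\n H_1)\cap(-\infty,a_0]$. Everything afterwards requires $\s(\n H_1)\subset(-M,\infty)$, i.e.\ $S=(\n H_1+M)^{-1}\ge 0$: you form $S^{1/2}$ and $T=\G S^{1/2}$, and in the last step you need $g$ to map $\s(\n H_1)$ homeomorphically onto $\s(S)$, which fails if $\s(\n H_1)$ meets $(-\infty,-M)$ (there $g$ has a pole and changes sign, and negative points of $\s(S)$ would destroy the claimed Hausdorff convergence). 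This is not a vacuous worry: the couplings $\a_j=\ell V_j$ may be negative, and the elementary edge-trace bound $|\f_j|^2\le 2\ell^{-1}\norm{\f_{jn}}_{L^2}^2+\ell\norm{\f_{jn}'}_{L^2}^2$ yields only a crude $\ell$-uniform bound $\n H_1\ge -C_0$, with $C_0$ governed by $\inf V$ and $\n$, which can lie far below $-M$ when $\inf V\ll\inf\s(H)$ (a deep narrow well) and $M$ is small. The repair is straightforward but missing: first prove such a crude uniform lower bound $\n H_1\ge -C_0$ (your proposal never establishes any lower bound on $\n H_1$ at all), then run your projection/asymptotic-isometry argument on a window $(a_0,b_0)$ with $a_0<-C_0<-M<b_0<\inf\s(H)$; only this genuinely gives $\n H_1\ge b_0>-M$.

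Second, the two quantitative inputs you invoke as known --- $P_\ell\G^*\G P_\ell\ge(1-\d_\ell)P_\ell$ on bounded spectral windows, and $\bignorm{S^{1/2}(\G^*\G-1)S^{1/2}}\to 0$ --- appear nowhere in the paper and must be proved. They do follow from what the paper provides, but only through a chain you should spell out: $\F^*\F=1$ gives $\G^*\G=II^*$; Corollary~\ref{cor-II^*-1-estimate} gives $\norm{II^*-1}_{\mathcal{B}(H^1(\Gamma),\H_1)}\le C\ell$; and you then need $(\n H_1+M)^{-1/2}$ (and likewise $E_{\n H_1}((a_0,b_0))$) to be bounded from $\H_1$ into $H^1(\Gamma)$ \emph{uniformly in} $\ell$. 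That uniform bound is again a form bound resting on the same trace estimate as above, so both of your missing ingredients reduce to one omitted lemma: uniform relative form-boundedness of the vertex term $\sum_j\a_j|\f_j|^2$ with respect to $\norm{\f'}^2+\norm{\f}^2$. The same insertion of $\G^*\G\approx 1$ is also silently needed when you extend Theorem~\ref{thm-main} from nonreal $z$ to $z=-M$, since the compression $\G(\cdot)\G^*$ of a product is not the product of the compressions. With that lemma stated and proved, your argument closes and would constitute a complete, self-contained proof of the corollary.
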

These corollaries follow from Theorem~\ref{thm-main} in a simple way using arguments analogous to those employed in \cite{NaTad}.

\section{Discrete Schr\"odinger operator and its convergence}

To prove Theorem~\ref{thm-main} we choose an appropriate discrete Schr\"odinger operator as the intermediate object and use the recent result by two of the present authors \cite{NaTad} on its continuum limit, cf. also \cite{C-G-J} and \cite{IsJe} for fresh related results. Let us first recall the basic notions. The Hilbert space of functions on the vertices,
\[
\H_2= \ell^2(\V)=\ell^2(\ell\ze^\n),
\]
is equipped with the norm
\[
\norm{u}_{\H_2}^2 =\ell^\n \sum_{j\in \V} |u_j|^2, \quad \text{where }u=(u_j)\in \H_2.
\]
Let the potential $V=V(x)$ be as before. We denote again $V_j=V(j)$ for $j\in \mathcal{V}$ and define a discrete Schr\"odinger operator $H_2$ on $\H_2$ by
\[
(H_2\f)_j=-\triangle_d\f_j +V_j\f_j, \quad \triangle_d\f_j= \frac{1}{\ell^2}\sum_{n\in\V(j)}(\f_n-\f_j)
\]
for $\f=(\f_j)\in \H_2$; it is easy to check that $H_2$ is a self-adjoint operator with its domain $\mathcal{D}(H_2)=\bigset{u=(u_j)\in \H_2}{(V_j u_j)\in \H_2}$.

The following result was proved in \cite{NaTad}:
\begin{thm}[Nakamura-Tadano]\label{thm-NT}
Let $z\in \co\setminus\re$ and adopt Assumption~\ref{ass-main}. Then there is a bounded operator $\F$ :  $\H_2\to L^2(\re^\n)$ such that in the limit  $\ell\to 0$ we have
\begin{align*}
&\bignorm{(H-z)^{-1} - \F(H_2-z)^{-1} \F^*}_{\mathcal{B}(L^2)}\to 0, \\[.2em]
&\bignorm{(H_2-z)^{-1} - \F^*(H-z)^{-1}\F}_{\mathcal{B}(\H_2)}\to 0.
\end{align*}
\end{thm}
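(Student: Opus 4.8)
The plan is to realise the identification operator $\F$ through the Fourier transform and to reduce the assertion, as far as possible, to the convergence of Fourier-multiplier symbols, the potential being handled as a perturbation controlled by Assumption~\ref{ass-main}. First I would construct $\F$. For $u=(u_j)\in\H_2=\ell^2(\ell\ze^\n)$ let $\hat u(\x)=\ell^\n\sum_{j\in\V}e^{-\mathrm{i}\,j\cdot\x}u_j$, a function on the dual torus which I identify with the cube $B_\ell=[-\pi/\ell,\pi/\ell]^\n$; with the normalisation fixed by the inner products above, $u\mapsto\hat u$ is unitary from $\H_2$ onto $L^2(B_\ell)$. I then define $\F u\in L^2(\re^\n)$ to be the function whose continuous Fourier transform equals $\hat u$ on $B_\ell$ and vanishes outside. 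With this choice $\F$ is a partial isometry, $\F^*\F=I_{\H_2}$, while $\F\F^*=P_\ell$ is the orthogonal projection onto the band-limited subspace $\bigset{f\in L^2(\re^\n)}{\supp\hat f\subset B_\ell}$.

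Second, I would settle the free case. On the Fourier side $-\triangle$ is multiplication by $\abs{\x}^2$, while $-\triangle_d$ is unitarily equivalent to multiplication by the symbol
\[
p_\ell(\x)=\frac{2}{\ell^2}\sum_{k=1}^\n\bigl(1-\cos(\ell\x_k)\bigr),\qquad \x\in B_\ell.
\]
From $\tfrac{2(1-\cos t)}{t^2}\in[4/\pi^2,1]$ for $\abs{t}\le\pi$ one gets the two-sided bound $\tfrac{4}{\pi^2}\abs{\x}^2\le p_\ell(\x)\le\abs{\x}^2$ on $B_\ell$, together with the quantitative estimate $\bigabs{p_\ell(\x)-\abs{\x}^2}\le C\ell^2\abs{\x}^4$. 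Thus $\F(-\triangle_d-z)^{-1}\F^*$ acts on $P_\ell L^2$ as multiplication by $(p_\ell(\x)-z)^{-1}$, and I would compare it with $(\abs{\x}^2-z)^{-1}$: the lower bound gives resolvent control uniform in $\ell$, the decay $(\abs{\x}^2-z)^{-1}=O(\abs{\x}^{-2})$ kills the frequencies outside $B_\ell$ that $P_\ell$ removes, and the symbol estimate handles the frequencies inside $B_\ell$ after splitting into $\abs{\x}\le R$ and $\abs{\x}>R$ and letting first $\ell\to0$, then $R\to\infty$. This yields $\bignorm{\F(-\triangle_d-z)^{-1}\F^*-(-\triangle-z)^{-1}}_{\mathcal{B}(L^2)}\to0$.

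Third comes the potential, which I expect to be the main obstacle, since $V$ need not be bounded and hence cannot be treated as a relatively bounded perturbation of $-\triangle$. Writing $R_0=(-\triangle-z)^{-1}$, $R_{2,0}=(-\triangle_d-z)^{-1}$ and using the second resolvent identity in each space,
\[
(H-z)^{-1}=R_0-R_0 M_V(H-z)^{-1},\qquad(H_2-z)^{-1}=R_{2,0}-R_{2,0}M_{V_\bullet}(H_2-z)^{-1},
\]
where $M_V$ and $M_{V_\bullet}$ denote multiplication by $V$ on $L^2(\re^\n)$ and by $(V_j)$ on $\H_2$, I would conjugate the discrete identity by $\F$, insert $\F\F^*=P_\ell$, and invoke the free convergence of the second step; the whole discrepancy is then expressed through the single intertwining error $M_V\F-\F M_{V_\bullet}$. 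The crux is to show that this error is negligible once sandwiched between resolvents. Because $V$ may grow, I would not bound it in operator norm but rather use Assumption~\ref{ass-main} through the weight $W=V+M$, aiming at
\[
\bignorm{(H+M)^{-1}\bigl(M_V\F-\F M_{V_\bullet}\bigr)(H_2+M)^{-1}}\longrightarrow0.
\]
Here the doubling condition $c_1^{-1}W(x)\le W(y)\le c_1 W(x)$ for $\abs{x-y}\le1$ guarantees that sampling $V$ at the vertices distorts it only by a bounded factor across each cell, while the uniform continuity of $W^{-1}$ gives, for every $\e>0$ and all small $\ell$, the pointwise closeness of $W(x)$ to $W(j)$ on the cell containing $x$; combined with the weights $W^{-1}$ furnished by $(H+M)^{-1}$ and $(H_2+M)^{-1}$, which dominate the regions where $V$ is large, this forces the error to zero. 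Verifying this weighted estimate, and in particular checking that the band-limiting built into $\F$ is compatible with the local comparison of $V$ and $(V_j)$, is the technically demanding point.

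Finally, combining the free convergence with the weighted intertwining estimate yields the first displayed limit $\bignorm{(H-z)^{-1}-\F(H_2-z)^{-1}\F^*}_{\mathcal{B}(L^2)}\to0$. The second limit then follows with no further work: conjugating by $\F^*(\,\cdot\,)\F$ and using $\F^*\F=I_{\H_2}$ gives the identity $(H_2-z)^{-1}-\F^*(H-z)^{-1}\F=\F^*\bigl[\F(H_2-z)^{-1}\F^*-(H-z)^{-1}\bigr]\F$, whence $\bignorm{(H_2-z)^{-1}-\F^*(H-z)^{-1}\F}_{\mathcal{B}(\H_2)}\le\bignorm{\F(H_2-z)^{-1}\F^*-(H-z)^{-1}}_{\mathcal{B}(L^2)}\to0$.
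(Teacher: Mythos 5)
First, a point of comparison: the paper does not actually prove this theorem --- it is quoted from the reference [NaTad], with only the remark that $\F$ is built from an orthonormal system in $L^2(\re^\n)$ and is an isometry $\H_2\to L^2(\re^\n)$. So your proposal has to stand on its own. Several pieces of it do: your band-limited (Shannon) construction of $\F$ is exactly of the advertised type, with $\F^*\F=1$ and $\F\F^*=P_\ell$; your Fourier-multiplier comparison of the free resolvents is correct (the bounds $\tfrac{4}{\pi^2}|\x|^2\le p_\ell(\x)\le|\x|^2$ and $|p_\ell(\x)-|\x|^2|\le C\ell^2|\x|^4$ in fact give an $O(\ell^2)$ norm estimate directly, without the $R$-splitting); and the closing observation that the second limit follows from the first by conjugation, using $\F^*\F=1$, is a clean and valid reduction.

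The genuine gap is the potential step, which is the core of the theorem and the reason Assumption~\ref{ass-main} has the form it does. The mechanism you invoke --- doubling plus uniform continuity of $(V+M)^{-1}$ plus spatial weights supplied by the resolvents --- is demonstrably insufficient, already for \emph{bounded} $V$. Take $\n=1$ and $V(x)=2+\cos x$, which satisfies Assumption~\ref{ass-main}; then all weights $(V+M)^{\pm1}$ are bounded, so your weighted error is comparable to $\|M_V\F-\F M_{V_\bullet}\|$ itself. This does \emph{not} tend to $0$: take $u_\ell\in\H_2$ whose discrete Fourier transform concentrates near the Brillouin-zone edge $\x_0=\pi/\ell-\tfrac12$. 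Multiplying $\F u_\ell$ by $\cos x$ pushes half of its spectrum to $\x_0+1>\pi/\ell$, outside the zone, while multiplying by $\cos j$ on the lattice aliases that same half to $-\pi/\ell+\tfrac12$; the two halves sit at disjoint frequencies, so $\|M_V\F u_\ell-\F M_{V_\bullet}u_\ell\|\ge c\|u_\ell\|$ with $c>0$ independent of $\ell$. The error is killed only because \emph{both} resolvents crush such high-frequency states by factors of order $\ell^2$; hence any correct proof must interleave the spatial weights with frequency decay retained from the resolvents. That interplay --- not the local comparison of $V(x)$ with $V_j$ --- is the actual technical content, and it is absent from your sketch.

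There are two further structural problems for unbounded $V$. Your reduction via second resolvent identities places the \emph{free} resolvent $R_0$ next to the intertwining error, and $R_0$ carries no spatial weight; with a weight on only one side you are left with $|V(x)-V_j|/(V_j+M)$, which under Assumption~\ref{ass-main} may grow like $c_1^{|x-j|}$ (e.g.\ $V(x)=e^{|x|}$ is admissible), and the polynomially decaying sinc kernel of $\F$ cannot beat that --- those terms are not even bounded. One must instead use an identity keeping the full resolvents on both sides, such as
\begin{equation*}
(H-z)^{-1}-\F(H_2-z)^{-1}\F^*=(H-z)^{-1}(1-\F\F^*)+(H-z)^{-1}\bigl(\F H_2-H\F\bigr)(H_2-z)^{-1}\F^*.
\end{equation*}
Moreover, even the statement that the resolvents ``furnish the weight $(V+M)^{-1}$'' is itself a lemma with real content: resolvents do not commute with multiplication operators, and extracting the weight amounts to the uniform bounds $\|(V+M)(H+M)^{-1}\|\le C$ and $\|(V_\bullet+M)(H_2+M)^{-1}\|\le C$, which require a localization argument using precisely the doubling condition (they fail for general $V\ge0$). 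Neither this lemma nor the frequency analysis above appears in your proposal, so step three remains a plan rather than a proof.
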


The identification operator $\F$ here is constructed using an orthonormal basis in $L^2(\re^\n)$, and it is an isometry from $\H_2$ into $L^2(\re^\n)$, see \cite{NaTad} for details. Thus we know that $H$ is approximated by the indicated discrete Schr\"odinger operator, and it will be sufficient to show that that the latter is in turn approximated by the quantum graph Hamiltonian, and vice versa.

\section{Approximation of the quantum graph Hamiltonian by the discrete Schr\"odinger operator}

Our aim is to show that $\n H_1$ and $H_2$ are close to each other with an appropriate identification map when the spacing $\ell$ is small. Throughout this section, we suppose that $V$ is bounded from below.

\subsection{Identification operators}

Let $I\, :\, \H_2\to \H_1$ be the embedding by linear interpolation, namely $\f=(\f_j)\in \H_1 \mapsto I\f=(\f_{jn})\in\H_2$ defined by
\[
\f_{jn}(x(t))= (1-t)\f_j+t\f_n, \quad \text{where } x(t)=(1-t)j+tn\in  [j,n].
\]
We note $I$ is bounded from $\H_2$ into $\H_1$.

Furthermore, we define the trace operator $K$ : $H^1(\Gamma)\to \H_2$ by
\[
K\,:\, \f=(\f_{jn})\in H^1(\Gamma) \mapsto (K\f)_j=\f_{jn}(j) \quad (\forall n\in \V(j)).
\]

%

\subsection{Preliminary estimates}

\begin{lem} \label{lem-IK-1-estimate}For any $\ell>0$ we have
\[
\norm{IK-1}_{\mathcal{B}(H^1(\Gamma),\H_1)}\leq \ell.
\]
\end{lem}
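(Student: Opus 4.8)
The plan is to reduce the estimate to a single edge and apply a one-dimensional interpolation-error (Poincaré-type) bound. First I would write out $IK$ explicitly. For $\f=(\f_{jn})\in H^1(\Gamma)$ the vertex matching condition guarantees that $(K\f)_j=\f_{jn}(j)$ is independent of the choice of $n\in\V(j)$, so $K$ is well defined; consequently $IK\f$ is the function whose restriction to each edge $\L_{jn}$ is the affine interpolant $L_{jn}$ between the values $\f_j=\f_{jn}(j)$ and $\f_n=\f_{jn}(n)$. Hence $(IK-1)\f$ restricted to $\L_{jn}$ equals $w_{jn}:=\f_{jn}-L_{jn}$, which by construction vanishes at both endpoints, $w_{jn}(j)=w_{jn}(n)=0$.

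Second, working on a single edge parametrized by arc length $s\in[0,\ell]$, I would observe that the slope of the interpolant is exactly the mean of $\f_{jn}'$, so that $w_{jn}'=\f_{jn}'-\ell^{-1}(\f_n-\f_j)$; since subtracting the mean does not increase the $L^2$ norm, this gives $\int_{\L_{jn}}|w_{jn}'|^2\,\dd s\le\int_{\L_{jn}}|\f_{jn}'|^2\,\dd s$. Using $w_{jn}(j)=0$ together with the fundamental theorem of calculus and the Cauchy--Schwarz inequality yields $|w_{jn}(s)|^2\le \ell\int_{\L_{jn}}|w_{jn}'|^2\,\dd s$, and integrating in $s$ over the edge of length $\ell$ produces the edgewise bound $\int_{\L_{jn}}|w_{jn}|^2\,\dd s\le \ell^2\int_{\L_{jn}}|\f_{jn}'|^2\,\dd s$.

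Third, I would sum this estimate over all edges $\L_{jn}\in\L$ against the common weight $\ell^{\n-1}/\n$ appearing in the inner product of $\H_1$. The weight factors out of both sides identically, so summing the edgewise inequality gives directly $\norm{(IK-1)\f}_{\H_1}^2\le \ell^2\,\norm{\f'}_{\H_1}^2\le \ell^2\,\norm{\f}_{H^1(\Gamma)}^2$, where in the last step I use $\norm{\f}_{H^1(\Gamma)}^2=\norm{\f}_{\H_1}^2+\norm{\f'}_{\H_1}^2$. Taking square roots and the supremum over $\f$ then yields the claimed bound $\norm{IK-1}_{\mathcal{B}(H^1(\Gamma),\H_1)}\le\ell$.

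I do not anticipate a serious obstacle; this is essentially a quantitative interpolation-error estimate. The only points requiring care are verifying that the vertex matching condition makes $K$ (and hence $IK$) well defined, with the interpolant attaining the correct values at \emph{both} endpoints of every edge, and keeping track of the weight $\ell^{\n-1}/\n$ so that it cancels cleanly rather than contributing spurious powers of $\ell$. I note that a sharper constant $\ell/\pi$ is in fact available via the Friedrichs inequality on the interval (the lowest Dirichlet eigenvalue being $(\pi/\ell)^2$), but the crude Cauchy--Schwarz bound already suffices for the stated estimate.
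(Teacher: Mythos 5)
Your proof is correct and follows essentially the same route as the paper: reduce to a single edge, establish the interpolation-error bound $\int_{\L_{jn}}\abs{\f_{jn}-(IK\f)_{jn}}^2\,\dd t\leq \ell^2\int_{\L_{jn}}\abs{\f_{jn}'}^2\,\dd t$, and sum over edges so that the weight $\ell^{\n-1}/\n$ cancels and $\norm{(IK-1)\f}_{\H_1}\leq\ell\norm{\f'}_{\H_1}\leq\ell\norm{\f}_{H^1(\Gamma)}$. The only (immaterial) difference is how the edgewise bound is obtained: you use the orthogonality of mean subtraction plus a Poincar\'e-type inequality for the error term vanishing at the endpoints, whereas the paper bounds the error pointwise by $\int_0^\ell\abs{\f_{jn}'(s)}\,\dd s$ and applies the Schwarz inequality directly.
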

\begin{proof}
Given $\f=(\f_{jn})\in C^1(\Gamma)$, we write $\f_j=\f_{jn}(j)$, $\forall n\in\V(j)$. Let $\tilde\f=IK\f$, in other words
\[
\tilde\f_{jn}(x(t))= (1-t)\f_j+t\f_n,
\quad \text{where }\: x(t)=(1-t)j+tn, \ 0\leq t\leq 1.
\]
We identify $\L_{jn}\cong [0,\ell]$ for the moment. Then for $t\in [0,\ell]$ we have
\[
\f_{jn}(t)-\tilde\f_{jn}(t)=\int_0^t \f_{jn}'(s)\,\dd s -\frac{t}{\ell}\int_0^\ell \f_{jn}'(s)\,\dd s,
\]
since
\[
\tilde \f_{jn}(t)= \f_{jn}(j)+ \frac{t}{\ell}\int_0^\ell \f_{jn}'(s)\,\dd s, \quad t\in[0,\ell].
\]
From here we infer that
\[
\bigabs{\f_{jn}(t)-\tilde\f_{jn}(t)}\leq \int_0^\ell|\f_{jn}'(s)|\,\dd s
\leq \sqrt{\ell}\biggpare{\int_0^\ell |\f_{jn}'(s)|^2\,\dd s}^{1/2}
\]
using the Schwarz inequality, and consequently, we have
\[
\int_{\L_{jn}}\bigabs{\f_{jn}(t)-\tilde\f_{jn}(t)}^2\,\dd t
\leq \ell^2 {\int_{\L_{jn}} |\f_{jn}'(t)|^2\,\dd t}
\]
Summing up this over the edges $\L_{jn}$, we find
\[
\norm{\f-\tilde\f}_{\H_1}\leq \ell\norm{\f'}_{\H_1} \leq \ell\norm{\f}_{H^1(\Gamma)},
\]
and by the density argument, we get the estimate for any $\f\in H^1(\Gamma)$.
\end{proof}

\begin{lem}\label{lem-I^*-K-estimate} For any $\ell>0$ we have
\[
\norm{I^*-K}_{\mathcal{B}(H^1(\Gamma),\H_2)}\leq \frac{\ell}{\sqrt{5}}.
\]
\end{lem}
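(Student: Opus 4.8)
The plan is to compute the adjoint $I^*$ explicitly and then estimate $(I^*-K)\g$ by hand. First I would pair $I\f$ with an arbitrary $\g\in\H_1$ in the $\H_1$ inner product. Using the arc-length parameter $s\in[0,\ell]$ on each edge, $(I\f)_{jn}(s)=(1-s/\ell)\f_j+(s/\ell)\f_n$, so the coefficient of $\overline{\f_j}$ collects, over the $2\n$ edges incident to $j$, the linear weight that equals $1$ at $j$ and $0$ at the opposite endpoint. Matching this against $\jap{\f,I^*\g}_{\H_2}=\ell^\n\sum_j\overline{\f_j}(I^*\g)_j$ and absorbing the normalization $\frac{\ell^{\n-1}}{\n}$ of the $\H_1$ product, the computation yields
\[
(I^*\g)_j=\frac{1}{\n\ell}\sum_{n\in\V(j)}\int_0^\ell\Bigl(1-\frac{s}{\ell}\Bigr)\g_{jn}(s)\,\dd s.
\]

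Next I would subtract $K\g$. Writing $\g_{jn}(s)=\g_j+\int_0^s\g_{jn}'$ and using $\int_0^\ell(1-s/\ell)\,\dd s=\ell/2$ together with $|\V(j)|=2\n$, the constant part contributes exactly $\frac{1}{\n\ell}\cdot 2\n\cdot\frac{\ell}{2}\,\g_j=\g_j$, so it cancels the trace and only the derivative remainder survives. A Fubini interchange, together with $\int_u^\ell(1-s/\ell)\,\dd s=(\ell-u)^2/(2\ell)$, produces the clean formula
\[
(I^*\g-K\g)_j=\frac{1}{2\n\ell^2}\sum_{n\in\V(j)}\int_0^\ell(\ell-u)^2\,\g_{jn}'(u)\,\dd u .
\]

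The estimate then follows from Cauchy--Schwarz applied simultaneously in the sum over $n\in\V(j)$ and the integral over $u$. The decisive constant enters only through $\int_0^\ell(\ell-u)^4\,\dd u=\ell^5/5$, which is where the $5$ comes from; this bounds $|(I^*\g-K\g)_j|^2$ by $\frac{\ell}{10\n}\sum_{n\in\V(j)}\int_0^\ell|\g_{jn}'|^2\,\dd u$. Summing over $j$, multiplying by $\ell^\n$, and observing that $\sum_j\sum_{n\in\V(j)}$ counts each edge twice reduces the right-hand side to $\sum_{\L_{jn}\in\L}\int_0^\ell|\g_{jn}'|^2\,\dd u=\frac{\n}{\ell^{\n-1}}\norm{\g'}_{\H_1}^2$. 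Once the constants combine one gets $\norm{I^*\g-K\g}_{\H_2}^2\le\frac{\ell^2}{5}\norm{\g'}_{\H_1}^2$, and since $\norm{\g'}_{\H_1}\le\norm{\g}_{H^1(\Gamma)}$ the claim follows; a density argument over $C^1(\Gamma)$ then extends it to all of $H^1(\Gamma)$, exactly as in Lemma~\ref{lem-IK-1-estimate}.

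I expect the main difficulty to be purely bookkeeping rather than conceptual. The constant $1/\sqrt5$ is sharp under this argument only if every factor is tracked correctly -- the $\tfrac15$ from the quartic integral, the factor $2$ from the double counting of edges, the averaging factor $1/(2\n)$, and the interplay of the $\frac{\ell^{\n-1}}{\n}$ and $\ell^\n$ normalizations must all conspire -- and it is very easy to misplace a factor of $2$ and land on a different (though still admissible) constant. The one genuinely structural point to get right is the computation of $I^*$: the $1/\n$ weight in the $\H_1$ inner product combined with the $2\n$ incident edges is precisely what makes the constant term reproduce $\g_j$ and cancel $K\g$, leaving a remainder controlled entirely by $\g'$.
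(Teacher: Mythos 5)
Your proof is correct and takes essentially the same route as the paper: the paper also computes $(I^*\f)_j=\frac{1}{\n\ell}\sum_{n\in\V(j)}\int_0^\ell(1-t/\ell)\f_{jn}(t)\,\dd t$, but extracts the $(1-t/\ell)^2$ weight by integration by parts rather than by the fundamental theorem of calculus plus Fubini, arriving at the identical remainder formula $(I^*\f-K\f)_j=\frac{1}{2\n}\sum_{n\in\V(j)}\int_0^\ell(1-t/\ell)^2\f'_{jn}(t)\,\dd t$. The subsequent Cauchy--Schwarz step, the quartic integral $\int_0^\ell(1-t/\ell)^4\,\dd t=\ell/5$, the edge double-counting, and the final constant $\ell^2/5$ all match your bookkeeping exactly.
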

\begin{proof}
A simple computation yields for $\f\in C^1(\Gamma)$ the relation
\begin{align*}
(I^*\f)_j=& \frac{1}{\n\ell}\sum_{n\in \V(j)} \int_0^\ell \biggpare{1-\frac{t}{\ell}} \f_{jn}(t)\,\dd t \\[.2em]
=& \frac{1}{2\n}\sum_{n\in \V(j)} \biggbra{\biggbrac{-\biggpare{1-\frac{t}{\ell}}^2 \f_{jn}(t)}_0^\ell + \int_0^\ell \biggpare{1-\frac{t}{\ell}}^2 \f'_{jn}(t)\,\dd t} \\[.2em]
=& \f_j +\frac{1}{2\n}\sum_{n\in \V(j)} \int_0^\ell \biggpare{1-\frac{t}{\ell}}^2 \f'_{jn}(t)\,\dd t.
\end{align*}
This further implies
\begin{align*}
\norm{I^* \f - K\f}^2_{\H_2}=\,& \frac{\ell^\n}{4\n^2} \sum_{j\in\V}\,\Biggabs{\sum_{n\in \V(j)} \int_0^\ell \biggpare{1-\frac{t}{\ell}}^2 \f'_{jn}(t)\,\dd t\,}^2 \\[.2em]
\leq\,& \frac{\ell^\n}{2\n} \sum_{j\in\V}\sum_{n\in \V(j)} \int_0^\ell \biggpare{1-\frac{t}{\ell}}^4 dt  \int_0^\ell \abs{\f'_{jn}(t)}^2\,\dd t \\[.2em]
=\,& \frac{\ell^2}{5} \norm{\f'}_{\mathcal{H}_1}^2
\leq \frac{\ell^2}{5}\norm{\f}_{H^1(\Gamma)}^2,
\end{align*}
which also holds for any $\f\in H^1(\Gamma)$ proving thus the claim.
\end{proof}

Since $I$ is bounded, Lemmata~\ref{lem-IK-1-estimate} and \ref{lem-I^*-K-estimate} in combination with the triangle inequality give the following result:
\begin{cor}\label{cor-II^*-1-estimate} There is a $C>0$ such that for all $\ell>0$ we have
\[
\norm{II^*-1}_{\mathcal{B}(H^1(\Gamma),\H_1)}\leq C\ell.
\]
\end{cor}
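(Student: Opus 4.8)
The plan is to interpolate between $II^*$ and the identity through the product $IK$, exploiting that the two preceding lemmata already control $IK-1$ and $I^*-K$ directly. Concretely, I would start from the algebraic identity
\[
II^* - 1 = I(I^* - K) + (IK - 1),
\]
read as an identity of operators from $H^1(\Gamma)$ into $\H_1$ (here $I^*$ is restricted to $H^1(\Gamma)\subset\H_1$, where it takes values in $\H_2$, and the last $1$ is the inclusion $H^1(\Gamma)\hookrightarrow\H_1$). Applying the triangle inequality in $\mathcal{B}(H^1(\Gamma),\H_1)$ then reduces the estimate to bounding the two summands separately.

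For the second summand Lemma~\ref{lem-IK-1-estimate} gives at once $\norm{IK-1}_{\mathcal{B}(H^1(\Gamma),\H_1)}\le\ell$. For the first summand I would factor the operator norm submultiplicatively,
\[
\norm{I(I^*-K)}_{\mathcal{B}(H^1(\Gamma),\H_1)}\le \norm{I}_{\mathcal{B}(\H_2,\H_1)}\,\norm{I^*-K}_{\mathcal{B}(H^1(\Gamma),\H_2)},
\]
and invoke Lemma~\ref{lem-I^*-K-estimate} for the factor $\norm{I^*-K}_{\mathcal{B}(H^1(\Gamma),\H_2)}\le \ell/\sqrt5$. Combining the two bounds would give $\norm{II^*-1}_{\mathcal{B}(H^1(\Gamma),\H_1)}\le \ell\bigl(1+\norm{I}_{\mathcal{B}(\H_2,\H_1)}/\sqrt5\bigr)$, so that one may take $C=1+\norm{I}_{\mathcal{B}(\H_2,\H_1)}/\sqrt5$ \emph{provided} the norm of $I$ stays bounded as $\ell\to0$.

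The one step requiring genuine care — the only place where something beyond bookkeeping happens — is precisely this uniform control of $\norm{I}_{\mathcal{B}(\H_2,\H_1)}$: the remark preceding the lemmata records that $I$ is bounded for each fixed $\ell$, but for the corollary the constant $C$ must not blow up as $\ell\to0$. I would verify uniform boundedness by a direct computation. On each edge $\L_{jn}\cong[0,\ell]$ the interpolant obeys
\[
\int_0^\ell \bigabs{(I\f)_{jn}(t)}^2\,\dd t = \ell\int_0^1 \bigabs{(1-u)\f_j+u\f_n}^2\,\dd u \le \tfrac{\ell}{2}\bigpare{|\f_j|^2+|\f_n|^2},
\]
where the last inequality uses $\int_0^1(1-u)^2\,\dd u=\int_0^1 u^2\,\dd u=\tfrac13$ together with an elementary bound on the cross term. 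Summing over $\L$ with the weight $\ell^{\n-1}/\n$ of the inner product on $\H_1$, and noting that every vertex of $\ell\ze^\n$ has exactly $2\n$ neighbours so that $\sum_{\L_{jn}\in\L}(|\f_j|^2+|\f_n|^2)=2\n\sum_{j\in\V}|\f_j|^2$, the weights collapse to
\[
\norm{I\f}_{\H_1}^2 \le \ell^\n\sum_{j\in\V}|\f_j|^2 = \norm{\f}_{\H_2}^2,
\]
i.e. $\norm{I}_{\mathcal{B}(\H_2,\H_1)}\le1$ for every $\ell>0$. With this uniform estimate the constant may be taken as $C=1+1/\sqrt5$, which finishes the argument.
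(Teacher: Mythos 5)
Your proof is correct and takes essentially the same route as the paper, whose entire argument is the observation that Lemmata~\ref{lem-IK-1-estimate} and \ref{lem-I^*-K-estimate}, the triangle inequality, and the boundedness of $I$ yield the claim via the decomposition $II^*-1=I(I^*-K)+(IK-1)$. Your explicit verification that $\norm{I}_{\mathcal{B}(\H_2,\H_1)}\leq 1$ \emph{uniformly} in $\ell$ is a useful supplement, since the paper only records that $I$ is bounded, while the corollary's uniform constant $C$ does require exactly this $\ell$-independence.
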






\subsection{Explicit formula for $K(\n H_1-z)^{-1}I$}

Now we are going to derive an explicit expression for the sandwiched resolvent of the operator $\n H_1$ which will play the key role in the proof of our main theorem.
Following the standard convention, we write points of the resolvent set as $z=k^2$.
\begin{lem}\label{lem-key-ODE-estimate}
Let $\g=(\g_{jn})\in \big(\bigoplus_{jn} H^2(\L_{jn})\big)\cap H^1(\Gamma)$, $\f\in\H_2$, and  $k^2\notin \re$. Then there are operators $M_1,M_2\in \mathcal{B}(\H_2)$ such that
\begin{equation}\label{eq-H1-eigen-eq}
(\n H_1-k^2)\g=I\f
\end{equation}
holds if and only if
\begin{equation}\label{eq-ODEonGraph}
-\n\g_{jn}'' -k^2\g_{jn} = (I\f)_{jn}\quad \mathrm{on }\;\, \L_{jn},
\end{equation}
 and
\begin{equation}\label{eq-approximate-eigen-eq}
(H_2-k^2+M_1)K\g =(1+M_2)\f.
\end{equation}
Moreover, $M_1$ and $M_2$ satisfy $\norm{(H_2-k^2)^{-1}M_1}=\mathcal{O}(\ell)$ and $\norm{(H_2-k^2)^{-1}M_2}=\mathcal{O}(\ell)$ as $\ell\to 0$.
\end{lem}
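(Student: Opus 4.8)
The plan is to solve the edge-wise ODE \eqref{eq-ODEonGraph} explicitly, translate the $\delta$-coupling vertex condition into the discrete equation \eqref{eq-approximate-eigen-eq}, and then control the remainders through a single resolvent estimate on the discrete Laplacian. First, for $\g\in\big(\bigoplus_{jn}H^2(\L_{jn})\big)\cap H^1(\Gamma)$, I would note that $(\n H_1-k^2)\g=I\f$ is equivalent to \eqref{eq-ODEonGraph} on every edge together with the matching condition $\sum_{n\in\V(j)}\g_{jn}'(j)=\a_j\g_j$: membership in $H^1(\Gamma)$ already encodes continuity at the vertices, $\n H_1$ acts as $-\n\,\g_{jn}''$ on each edge, and $\g\in\mathcal{D}(H_1)$ amounts precisely to the extra derivative condition. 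So the task reduces to rewriting that vertex condition. Setting $\k=k/\sqrt{\n}$ (non-real, so $\sin\k\ell\neq0$ for every $\ell>0$) and identifying $\L_{jn}\cong[0,\ell]$ with $j\leftrightarrow0$, the general solution of \eqref{eq-ODEonGraph} is
\[
\g_{jn}(t)=A_{jn}\cos\k t+B_{jn}\sin\k t-\tfrac{1}{k^2}(I\f)_{jn}(t),
\]
since $(I\f)_{jn}$ is affine in $t$ and $-k^{-2}(I\f)_{jn}$ is a particular solution; continuity, $\g_{jn}(0)=\g_j$ and $\g_{jn}(\ell)=\g_n$, fixes $A_{jn}=w_j$ and $B_{jn}=(w_n-w_j\cos\k\ell)/\sin\k\ell$, where $w:=K\g+\f/k^2\in\H_2$, so $w_j=\g_j+\f_j/k^2$.

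Next I would insert these coefficients into the outward derivatives $\g_{jn}'(0)=B_{jn}\k-(\f_n-\f_j)/(k^2\ell)$ and sum over $n\in\V(j)$. Using $\sum_{n\in\V(j)}(u_n-u_j)=\ell^2\triangle_d u_j$ and $\abs{\V(j)}=2\n$, the condition $\sum_n\g_{jn}'(j)=\ell V_j\g_j$ becomes, after dividing by $\ell$,
\[
c_\ell\,\triangle_d w_j+p_\ell\,w_j-\tfrac{1}{k^2}\triangle_d\f_j=V_j\g_j,\qquad c_\ell:=\tfrac{\k\ell}{\sin\k\ell},\ \ p_\ell:=\tfrac{2\n(1-\cos\k\ell)}{\ell^2}\,c_\ell .
\]
Substituting $w=K\g+\f/k^2$ and comparing with $(H_2-k^2)K\g=-\triangle_d K\g+(V-k^2)K\g$, one reads off \eqref{eq-approximate-eigen-eq} with
\[
M_1=(1-c_\ell)\triangle_d+(k^2-p_\ell),\qquad M_2=-\tfrac{1}{k^2}M_1 .
\]
Every step here is a reversible algebraic manipulation, so the stated equivalence holds in both directions, and the elementary expansions $c_\ell=1+\mathcal{O}(\ell^2)$, $p_\ell=k^2+\mathcal{O}(\ell^2)$ give $1-c_\ell=\mathcal{O}(\ell^2)$ and $k^2-p_\ell=\mathcal{O}(\ell^2)$.

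Finally, since $M_2=-M_1/k^2$, it suffices to prove $\norm{(H_2-k^2)^{-1}M_1}=\mathcal{O}(\ell)$. The scalar part contributes $(k^2-p_\ell)(H_2-k^2)^{-1}=\mathcal{O}(\ell^2)$, so everything hinges on $(1-c_\ell)(H_2-k^2)^{-1}\triangle_d$; this is the step I expect to be the main obstacle, because the crude bound $\norm{\triangle_d}=\mathcal{O}(\ell^{-2})$ only gives $\mathcal{O}(1)$. Instead I would exploit the form inequality $0\le-\triangle_d\le H_2+C_0$, valid since $V\ge-C_0$. For $v=(H_2-k^2)^{-1}u$, the identity $\jap{-\triangle_d v,v}=\jap{H_2 v,v}-\jap{Vv,v}$ together with $\jap{H_2 v,v}=\Re\jap{u,v}+\Re(k^2)\norm{v}^2$ (as $H_2$ is self-adjoint) and $\norm{v}\le\norm{u}/\abs{\Im(k^2)}$ yields
\[
\norm{(-\triangle_d)^{1/2}(H_2-k^2)^{-1}}\le C_k \quad\text{uniformly in }\ell .
\]
Factorising $\triangle_d=-(-\triangle_d)^{1/2}(-\triangle_d)^{1/2}$ and using $\norm{(-\triangle_d)^{1/2}}\le 2\sqrt{\n}/\ell$ then gives $\norm{(H_2-k^2)^{-1}\triangle_d}=\mathcal{O}(\ell^{-1})$, whence $(1-c_\ell)(H_2-k^2)^{-1}\triangle_d=\mathcal{O}(\ell^2)\cdot\mathcal{O}(\ell^{-1})=\mathcal{O}(\ell)$, and the claimed bounds on $M_1$ and $M_2$ follow.
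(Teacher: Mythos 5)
Your proof is correct, and it shares the paper's overall skeleton: solve the edge ODE explicitly, substitute the outward derivatives into the $\delta$-coupling condition at each vertex, and control the resulting error operators via the uniform relative form bound of $-\triangle_d$ with respect to $H_2$ (the square-root trick giving $\norm{(-\triangle_d)^{1/2}(H_2-z)^{-1}}\leq C$ uniformly in $\ell$, hence an $\mathcal{O}(\ell^{-1})$ bound after multiplying by $\norm{(-\triangle_d)^{1/2}}=\mathcal{O}(\ell^{-1})$). The genuine difference lies in the normalization of the vertex equation, and hence in the decomposition into $M_1,M_2$. The paper scales the equation so that $-\triangle_d$ appears with coefficient exactly $1$; its error operators are then $(M_1\g)_j=\bigpare{\tfrac{\sin(k'\ell)}{k'\ell}-1}V_j\g_j-k^2\bigpare{\tfrac{1-\cos(k'\ell)}{(k'\ell)^2/2}-1}\g_j$, with an $M_2$ containing $\mathcal{O}(\ell^2)\triangle_d$, so it needs both halves of its auxiliary estimate (bounds on $\triangle_d(H_2-z)^{-1}$ \emph{and} on $V(H_2-z)^{-1}$). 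You instead keep $V$ with coefficient $1$ and push all $\mathcal{O}(\ell^2)$ corrections onto $\triangle_d$ and the scalar $k^2$, obtaining $M_1=(1-c_\ell)\triangle_d+(k^2-p_\ell)$ and the tidy relation $M_2=-M_1/k^2$. This buys two things: you need only the $\triangle_d$ half of the resolvent estimate (no bound on $V(H_2-z)^{-1}$ at all), and your $M_1,M_2$ are manifestly bounded uniformly in $\ell$ even when $V$ is unbounded, whereas the paper's $M_1$ contains $V_j$ and is, strictly speaking, only densely defined for unbounded $V$ — so your version delivers the assertion $M_1,M_2\in\mathcal{B}(\H_2)$ more faithfully. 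One presentational nit: in passing from $\norm{(-\triangle_d)^{1/2}(H_2-k^2)^{-1}}\leq C_k$ to $\norm{(H_2-k^2)^{-1}\triangle_d}=\mathcal{O}(\ell^{-1})$ you should note that you take adjoints (replacing $k^2$ by $\overline{k^2}$), since your estimate has the resolvent on the other side; the paper needs, and likewise glosses over, the same step.
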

\begin{proof}
We note \eqref{eq-H1-eigen-eq} implies \eqref{eq-ODEonGraph} by the definition of $H_1$. We denote $\f_{jn}=(I\f)_{jn}$ and recall that
\[
\f_{jn}(x)= \biggpare{1-\frac{x}{\ell}}\f_j+\frac{x}{\ell} \f_n= \f_j+\frac{x}{\ell}(\f_n-\f_j),
\quad x\in[0,\ell]\cong \L_{jn}.
\]
Given the boundary values $\g_j=\g_{jn}(0)$ and $\g_n=\g_{jn}(\ell)$, we can solve the equation \eqref{eq-ODEonGraph} explicitly using the standard ODE method, obtaining thus the expression
\begin{align*}
\g_{jn}(x)
&= \frac{\sin(k'x)}{\sin(k'\ell)}\g_n +\frac{\sin(k'(\ell-x))}{\sin(k'\ell)}\g_j \\[.2em]
&\quad +\frac{1}{k'^2}\biggpare{\frac{\sin(k'x)}{\sin(k'\ell)}-\frac{x}{\ell}}\frac{\f_n}{\n}
+\frac{1}{k'^2}\biggpare{\frac{\sin(k'(\ell-x))}{\sin(k'\ell)}-1+\frac{x}{\ell}}\frac{\f_j}{\n},
\end{align*}
where $k'=k/\sqrt{\n}$. In particular, this yields
\begin{align*}
\g_{jn}'(j)=\g_{jn}'(0)
&= \frac{k'}{\sin(k'\ell)}(\g_n-\g_j) +\frac{k'(1-\cos(k'\ell))}{\sin(k'\ell)}\,\g_j \\[.2em]
&\quad +\frac{1}{k'^2}\biggpare{\frac{k'}{\sin(k'\ell)}-\frac{1}{\ell}}\frac{\f_n-\f_j}{\n}
+\frac{1-\cos(k'\ell)}{k'\sin(k'\ell)}\frac{\f_j}{\n}.
\end{align*}
Substituting this into the boundary condition in the definition of $H_1$,
\[
\sum_{n\in\V(j)} \g_{jn}'(j)=\a_j\g_j,
\]
we get the relation
\begin{align*}
&\frac{k'}{\sin(k'\ell)}\sum_{n\in\V_j} (\g_n-\g_j) +\frac{k'(1-\cos(k'\ell))}{\sin(k'\ell)}|\V_j| \g_j \\[.2em]
&+\frac{1}{k'^2}\biggpare{\frac{k'}{\sin(k'\ell)}-\frac{1}{\ell}}\sum_{n\in\V_j}\frac{\f_n-\f_j}{\n}
+\frac{1-\cos(k'\ell)}{k'\sin(k'\ell)}|\V_j| \frac{\f_j}{\n}=\a_j\g_j
\end{align*}
for each $j\in\V$. Recalling that $\a_j=\ell V_j$ and $|\V_j|=2\n$, we can rewrite it as
\begin{align*}
&-\frac{1}{\ell^2}\sum_{n\in\V_j} (\g_n-\g_j) +\biggpare{\frac{\sin(k'\ell)}{k'\ell}}V_j \g_j
-k^2  \biggpare{\frac{1-\cos(k'\ell)}{(k'\ell)^2/2}}\g_j \\[.2em]
&\quad = -\biggpare{\frac{\sin(k'\ell)-k'\ell}{(k'\ell)^3}}\sum_{n\in\V_j}\frac{\f_n-\f_j}{\n}
+\biggpare{\frac{1-\cos(k'\ell)}{(k'\ell)^2/2}}\f_j.
\end{align*}
Next we note that by the Taylor series expansion we have
\[
\frac{\sin(k'\ell)}{k'\ell}=1+\mathcal{O}(\ell^2), \quad \frac{1-\cos(k'\ell)}{(k'\ell)^2/2}=1+\mathcal{O}(\ell^2),
\quad \frac{\sin(k'\ell)-k'\ell}{(k'\ell)^3}=\mathcal{O}(1)
\]
as $\ell\to 0$, hence setting
\begin{align*}
&(M_1\g)_j:= \biggpare{\frac{\sin(k'\ell)}{k'\ell}-1} V_j \g_j
-k^2 \biggpare{\frac{1-\cos(k'\ell)}{(k'\ell)^2/2}-1}\g_j, \\[.2em]
&(M_2\f)_j:= -\frac{1}{\n} \biggpare{\frac{\sin(k'\ell)-k'\ell}{(k'\ell)^3}}\sum_{n\in\V_j}(\f_n-\f_j)
+\biggpare{\frac{1-\cos(k'\ell)}{(k'\ell)^2/2}-1}\f_j,
\end{align*}
we can rewrite the above relation in the form\eqref{eq-approximate-eigen-eq},
\[
(H_2-k^2+M_1)K\g =(1+M_2)\f.
\]
Finally, we use the following claim to conclude the proof.
\begin{lem}\label{lem-relative-bound}
Suppose $V$ is bounded from below. Then for each $z\in \co\setminus\re$, there is a $C>0$ such that
\[
\norm{\triangle_d(H_2-z)^{-1}}_{\mathcal{B}(\mathcal{H}_2)}\leq C \ell^{-1}, \quad
\norm{V(H_2-z)^{-1}}_{\mathcal{B}(\mathcal{H}_2)}\leq C \ell^{-1}
\quad\mathrm{for }\;\: 0<\ell\leq 1.
\]
\end{lem}

\medskip

Using this result in combination with the above explicit expressions of $M_1$ and $M_2$ we get the estimates $\bignorm{(H_2-z)^{-1}M_1} =\mathcal{O}(\ell)$ and $\bignorm{(H_2-z)^{-1}M_2}=\mathcal{O}(\ell)$ as $\ell\to 0$ for any $z\in\co\setminus\re$.
\end{proof}

\begin{proof}[Proof of Lemma~\ref{lem-relative-bound}]
Since $V$ is by assumption bounded from below, $\triangle_d$ and $V$ are relatively form bounded with respect to $H_2=-\triangle_d+V$. On the other hand, we note that $\norm{\triangle_d}_{\mathcal{B}(\mathcal{H}_2)}=2\n\ell^{-2}$, and therefore 
\[
\norm{\triangle_d(H_2-z)^{-1}}\leq
\norm{|\triangle_d|^{1/2}}\cdot
\norm{|\triangle_d|^{1/2}(H_2-z)^{-1}}
\leq C\ell^{-1}.
\]
Then we also have
\[
\norm{V(H_2-z)^{-1}} = \norm{(H_2+\triangle_d)(H_2-z)^{-1}}
\leq \norm{H_2(H_2-z)^{-1}}+C\ell^{-1},
\]
and this completes the proof.
\end{proof}

\subsection{Approximation theorem}

Now we are in position to compare the resolvents of the operators $\n H_1$ and $H_2$ using the identification map $I$.
\begin{thm}\label{thm-key-thm}
Let $z\in \co\setminus\re$,  then there is a $C>0$ such that
\begin{align*}
&\bignorm{(H_2-z)^{-1} - I^* (\n H_1-z)^{-1} I}_{\mathcal{B}(\H_2)}\leq C\ell, \\[.2em]
&\bignorm{(\n H_1-z)^{-1} - I(H_2-z)^{-1} I^*}_{\mathcal{B}(\H_1)}\leq C\ell.
\end{align*}
\end{thm}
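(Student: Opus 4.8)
The plan is to do all the real work on the vertex space $\H_2$ via the explicit formula of Lemma~\ref{lem-key-ODE-estimate}, to compare the resulting expression with $(H_2-z)^{-1}$ by a Neumann series, and then to dress the comparison with the identification maps, trading the trace operator $K$ for the adjoint $I^*$ by means of the preliminary Lemmata~\ref{lem-IK-1-estimate}, \ref{lem-I^*-K-estimate} and Corollary~\ref{cor-II^*-1-estimate}. Both displayed inequalities will come from a single core identity on $\H_2$ together with one analytic input: a bound on the resolvent of $\n H_1$ that is uniform in $\ell$ when it is viewed as a map into the graph Sobolev space $H^1(\Gamma)$.

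First I would record the core identity. For $\f\in\H_2$ put $\g=(\n H_1-z)^{-1}I\f\in\mathcal{D}(\n H_1)$. Since $(\n H_1-z)\g=I\f$, Lemma~\ref{lem-key-ODE-estimate} yields $(H_2-z+M_1)K\g=(1+M_2)\f$, and because $\|(H_2-z)^{-1}M_1\|=\mathcal{O}(\ell)$ the operator $H_2-z+M_1=(H_2-z)(1+(H_2-z)^{-1}M_1)$ is invertible for small $\ell$, so that
\[
K(\n H_1-z)^{-1}I=(H_2-z+M_1)^{-1}(1+M_2).
\]
Expanding $(1+(H_2-z)^{-1}M_1)^{-1}$ in a Neumann series and using $\|(H_2-z)^{-1}M_2\|=\mathcal{O}(\ell)$ gives the first building block,
\[
\bignorm{K(\n H_1-z)^{-1}I-(H_2-z)^{-1}}_{\mathcal{B}(\H_2)}=\mathcal{O}(\ell).
\]

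The analytic input, and the step I expect to be the main obstacle, is the uniform bound $\|(\n H_1-z)^{-1}\|_{\mathcal{B}(\H_1,H^1(\Gamma))}\le C$ for all small $\ell$. The $\H_1$-contribution to the $H^1(\Gamma)$-norm is controlled by self-adjointness alone; the issue is the derivative $\|\g'\|_{\H_1}$. I would pass to a real point $-\mu$ with $\mu$ large, set $\g=(\n H_1+\mu)^{-1}w$ and read off from the quadratic form the identity
\[
\n\|\g'\|_{\H_1}^2+\jap{K\g,VK\g}_{\H_2}+\mu\|\g\|_{\H_1}^2=\Re\jap{w,\g}_{\H_1}
\]
(up to the normalization of the coupling term). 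Since $V\ge-C_0$, the potential pairing is $\ge-C_0\|K\g\|_{\H_2}^2$, and the trace estimate $\|K\g\|_{\H_2}^2\le C(\|\g\|_{\H_1}^2+\ell^2\|\g'\|_{\H_1}^2)$ — the same computation that underlies Lemmata~\ref{lem-IK-1-estimate} and \ref{lem-I^*-K-estimate} — shows that the potential is relatively form-bounded with relative bound $\mathcal{O}(\ell^2)$. For small $\ell$ it is therefore absorbed into $\n\|\g'\|^2+\mu\|\g\|^2$, yielding $\|\g'\|_{\H_1}\le C\|w\|_{\H_1}$; the case of general $z$ follows by the resolvent identity. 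The delicate point is precisely this bookkeeping of powers of $\ell$: the vertex sum $\jap{K\g,VK\g}_{\H_2}$ and the edge integral $\|\g'\|_{\H_1}^2$ carry compensating factors, and one must keep them matched so that nothing diverges as $\ell\to0$.

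With these two ingredients the theorem assembles quickly. For the first inequality I would write
\[
I^*(\n H_1-z)^{-1}I-(H_2-z)^{-1}=\bigl(K(\n H_1-z)^{-1}I-(H_2-z)^{-1}\bigr)+(I^*-K)(\n H_1-z)^{-1}I,
\]
where the first term is $\mathcal{O}(\ell)$ by the core estimate and the second is bounded by $\tfrac{\ell}{\sqrt5}\,\|(\n H_1-z)^{-1}\|_{\mathcal{B}(\H_1,H^1(\Gamma))}\,\|I\|$ using Lemma~\ref{lem-I^*-K-estimate} and the uniform bound, hence also $\mathcal{O}(\ell)$. For the second inequality I would conjugate the core identity by $I$ and $I^*$, so that $I(H_2-z)^{-1}I^*=IK(\n H_1-z)^{-1}II^*+\mathcal{O}(\ell)$ in $\mathcal{B}(\H_1)$, and then reduce $IK$ and $II^*$ to the identity through
\[
IK(\n H_1-z)^{-1}II^*-(\n H_1-z)^{-1}=(IK-1)(\n H_1-z)^{-1}II^*+(\n H_1-z)^{-1}(II^*-1).
\]
The first summand is $\mathcal{O}(\ell)$ by Lemma~\ref{lem-IK-1-estimate}, the boundedness of $II^*$ on $\H_1$, and the uniform bound; for the second, self-adjointness of $II^*-1$ lets me pass to the adjoint and bound it by $\|II^*-1\|_{\mathcal{B}(H^1(\Gamma),\H_1)}\,\|(\n H_1-\bar z)^{-1}\|_{\mathcal{B}(\H_1,H^1(\Gamma))}=\mathcal{O}(\ell)$ via Corollary~\ref{cor-II^*-1-estimate}. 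This completes both estimates.
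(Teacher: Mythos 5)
Your proof is correct and follows essentially the same route as the paper: the core identity $(H_2-z+M_1)K(\n H_1-z)^{-1}I=1+M_2$ from Lemma~\ref{lem-key-ODE-estimate}, inverted by a Neumann series, followed by trading $K$ for $I^*$ and dressing with $I$ via Lemmata~\ref{lem-IK-1-estimate}, \ref{lem-I^*-K-estimate} and Corollary~\ref{cor-II^*-1-estimate}. The only difference is that you supply a quadratic-form proof of the uniform bound $\norm{(\n H_1-z)^{-1}}_{\mathcal{B}(\H_1,H^1(\Gamma))}\leq C$, which the paper merely asserts in passing --- a worthwhile detail, since the $\mathcal{O}(\ell)$ conclusion does require that this bound be uniform in $\ell$.
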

\begin{proof}
We recall that $z=k^2\in\co\setminus\re$. By Lemma~\ref{lem-key-ODE-estimate}, we have
\begin{equation}\label{eq-2-3-1}
(H_2-z+M_1)K(\n H_1-z)^{-1}I =1+M_2
\end{equation}
on $\H_2$. We use the identity
\[
H_2-z+M_1= (H_2-z)\bigpare{1+(H_2-z)^{-1}M_1}
\]
which implies
\[
(H_2-z+M_1)^{-1}= \bigpare{1+(H_2-z)^{-1}M_1}^{-1} (H_2-z)^{-1}
\]
as long as $\ell$ is sufficiently small so that the first factor on the right-hand side makes sense. Combining this with \eqref{eq-2-3-1}, we get
\[
K(\n H_1-z)^{-1}I= \bigpare{1+(H_2-z)^{-1}M_1}^{-1} (H_2-z)^{-1}(1+M_2),
\]
and therefore
\begin{align*}
&K(\n H_1-z)^{-1}I-(H_2-z)^{-1} \\[.2em]
&=-\bigpare{1+(H_2-z)^{-1}M_1}^{-1} \bigpare{(H_2-z)^{-1}M_1 (H_2-z)^{-1} -(H_2-z)^{-1}M_2}.
\end{align*}
This implies, again by virtue of Lemma~\ref{lem-key-ODE-estimate},
\[
\bignorm{(H_2-z)^{-1}- K(\n H_1-z)^{-1}I}_{\mathcal{B}(\H_2)}=\mathcal{O}(\ell) \quad \text{as }\;\ell\to 0.
\]
Now we use Lemma~\ref{lem-I^*-K-estimate} and the triangle inequality to conclude that
\[
\bignorm{(H_2-z)^{-1}- I^*(\n H_1-z)^{-1}I}_{\mathcal{B}(\H_2)}=\mathcal{O}(\ell) \quad \text{as }\;\ell\to 0,
\]
since $(\n H_1-z)^{-1}$ is bounded as a map from $\H_1$ to $H^1(\Gamma)$.
In a similar way, we use Lemma~\ref{lem-IK-1-estimate} and Corollary~\ref{cor-II^*-1-estimate} to get the other estimate,
\[
\bignorm{I(H_2-z)^{-1}I^*- (\n H_1-z)^{-1}}_{\mathcal{B}(\H_1)}=\mathcal{O}(\ell)
\]
as $\ell\to 0$. This completes the proof.
\end{proof}

\section{Proof of Theorem~\ref{thm-main}}

To finish the task, it is now sufficient to combine Theorem~\ref{thm-key-thm} with Theorem~\ref{thm-NT}. We define the identification operator by
\[
\G:=\F I^*.
\]
Using the fact that $\F$ is bounded, in fact an isometry, we then have
\begin{align*}
\bignorm{&(H-z)^{-1} - \G(\n H_1-z)^{-1} \G^*} \\[.2em]
&\leq \bignorm{(H-z)^{-1} - \F(H_2-z)^{-1} \F^*}
+\bignorm{\F(H_2-z)^{-1}\F^* - \G(\n H_1-z)^{-1} \G^*} \\[.2em]
&=\bignorm{(H-z)^{-1} - \F(H_2-z)^{-1} \F^*}
+\bignorm{\F\bigpare{(H_2-z)^{-1} - I^*(\n H_1-z)^{-1}I }\F^*} \\[.2em]
&\leq \bignorm{(H-z)^{-1} - \F(H_2-z)^{-1} \F^*}
+\bignorm{(H_2-z)^{-1} - I^*(\n H_1-z)^{-1}I} \\[.2em]
&\to 0 \quad \text{as }\;\ell\to 0.
\end{align*}
The proof of the other estimate is almost identical, so we omit the computation; by that the proof of Theorem~\ref{thm-main} is finished. \qed

\subsection*{Acknowledgements}

P.E. was supported by the Czech Science Foundation within the project 21-07129S and by the EU project ${\rm CZ}.02.1.01/0.0/0.0/16$\underline{ }$ 019/0000778$.
S.N. was partially supported by JSPS Grant Numbers 15H03622 (2015--2019) and 21K03276 (2021--2024).
Y.T. was partially supported by JSPS Grant Numbers 20J00247 (2020--2021) and 21K20337 (2021--2023)


\end{document}